\documentclass{article}

\usepackage{amsmath,amssymb,amsthm,amsfonts,graphicx,xcolor}
\usepackage{cancel}
\usepackage[margin=1in]{geometry}

\newtheorem{remark}{Remark}
\theoremstyle{definition}
\newtheorem{definition}{Definition}[section] 

\newtheorem{prop}{Proposition}
\newtheorem{corollary}{Corollary}

\setlength{\parindent}{1em}
\setlength{\parskip}{.5em}

\newcommand{\real}{\mathcal{R}}

\newcommand{\tr}[1]{\mathbf{tr}\left(#1\right)}
\newcommand{\Exp}[1]{\mathbb{E}\left[#1\right]}
\newcommand{\Cov}[1]{\Exp{#1{#1}^T}}

\newcommand{\eqn}[1]{(\ref{eqn:#1})}
\newcommand{\eqnlabel}[1]{\label{eqn:#1}}

\title{Data Assimilation in Optimal Transport Framework (DRAFT)}
\author{Raktim Bhattacharya \\ Aerospace Engineering, Texas A\&M University.}

\begin{document}
\maketitle
%
%

\section{Mathematical Preliminaries}

\begin{definition} (\textbf{Wasserstein distance})
Consider the vectors $x_{1}, x_{2} \in \mathbb{R}^{n}$. Let $\mathcal{P}_{2}(p_1,p_2)$ denote the collection of all probability measures $p$ supported on the product space $\mathbb{R}^{2n}$, having finite second moment, with first marginal $p_{1}$ and second marginal $p_{2}$. The Wasserstein distance of order 2, denoted as $W_2$, between two probability measures $p_{1},p_{2}$, is defined as
\label{Wassdefn}
\begin{align}
&W_2(p_{1},p_{2}) \triangleq \left(\displaystyle\inf_{p\in\mathcal{P}_{2}(p_{1},p_{2})}\displaystyle\int_{\mathbb{R}^{2n}} \parallel x_{1}-x_{2}\parallel_{\ell_{2}\left(\mathbb{R}^{n}\right)}^{2} \: dp(x_{1},x_{2}) \right)
^{\frac{{1}}{2}}. \eqnlabel{W-dist}
\end{align}
\end{definition}
\begin{remark}
Intuitively, Wasserstein distance equals the least amount of work needed to morph one distributional shape to the other, and can be interpreted as the cost for Monge-Kantorovich optimal transportation plan \cite{villani2003topics}. The particular choice of $\ell_{2}$ norm with order 2 is motivated by \cite{halder2012further}. Further, one can prove (p. 208, \cite{villani2003topics}) that $W_2$ defines a metric on the manifold of PDFs.
\label{WassRemarkFirst}
\end{remark}
 
\begin{prop}
The Wasserstein distance $W_2$ between two multivariate Gaussians $\mathcal{N}(\mu_1, \Sigma_1)$ and $\mathcal{N}(\mu_2, \Sigma_2)$ in $\real^n$ is given by
\begin{align}   
W_2^2\left(\mathcal{N}(\mu_1, \Sigma_1),\mathcal{N}(\mu_2, \Sigma_2)\right)=\|\mu_1 - \mu_2\|^2+\tr{\Sigma_1+\Sigma_2-2\left(\sqrt{\Sigma_1}\Sigma_2\sqrt{\Sigma_1}\right)^{\frac{1}{2}}
}. \eqnlabel{wassdef}
\end{align}
\end{prop}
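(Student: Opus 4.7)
The plan is to reduce the problem to a linear-algebraic optimization over admissible cross-covariances and then solve that optimization in closed form. First I would expand the squared Euclidean cost as $\|x_1 - x_2\|^2 = x_1^T x_1 - 2 x_1^T x_2 + x_2^T x_2$ and take expectations under an arbitrary coupling $p\in\mathcal{P}_2(p_1,p_2)$. Using $\Exp{x_i^T x_i} = \tr{\Sigma_i} + \|\mu_i\|^2$ and $\Exp{x_1^T x_2} = \mu_1^T\mu_2 + \tr{C}$, where $C \triangleq \Exp{(x_1-\mu_1)(x_2-\mu_2)^T}$ is the cross-covariance induced by $p$, the integrand collapses to
\begin{align*}
\|\mu_1-\mu_2\|^2 + \tr{\Sigma_1+\Sigma_2} - 2\tr{C}.
\end{align*}
Hence $W_2^2$ equals this expression with $\tr{C}$ replaced by $\sup_{C} \tr{C}$ over couplings in $\mathcal{P}_2(p_1,p_2)$.

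Next I would argue that one may restrict attention to jointly Gaussian couplings. The cost functional depends on $p$ only through its first two moments, and for any admissible cross-covariance $C$ the Gaussian law with mean $(\mu_1,\mu_2)$ and joint covariance $\bigl(\begin{smallmatrix}\Sigma_1 & C \\ C^T & \Sigma_2\end{smallmatrix}\bigr)$ realises that $C$ while preserving the prescribed marginals. The feasibility condition on $C$ therefore reduces to positive semidefiniteness of this block matrix, and the infimum in the Wasserstein definition is equivalent to the matrix program
\begin{align*}
\max_{C\in\real^{n\times n}} \tr{C} \quad\text{subject to}\quad \begin{pmatrix}\Sigma_1 & C \\ C^T & \Sigma_2\end{pmatrix}\succeq 0.
\end{align*}

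The last step is to solve this program. I would use the Schur complement to parameterise feasible $C$ as $C = \sqrt{\Sigma_1}\,R\,\sqrt{\Sigma_2}$ with $R^T R \preceq I$, and then write $\tr{C} = \tr{\sqrt{\Sigma_2}\sqrt{\Sigma_1}\,R}$. Applying von Neumann's trace inequality (or, equivalently, reading off the polar decomposition of $\sqrt{\Sigma_2}\sqrt{\Sigma_1}$) shows that the maximum is the nuclear norm $\|\sqrt{\Sigma_2}\sqrt{\Sigma_1}\|_*$, and the identity $\|A\|_* = \tr{(A^T A)^{1/2}}$ simplifies this to $\tr{(\sqrt{\Sigma_1}\Sigma_2\sqrt{\Sigma_1})^{1/2}}$. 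Substituting into the expression above yields \eqn{wassdef}.

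The main obstacle I expect is the matrix-optimization step: handling the non-commutativity of $\Sigma_1$ and $\Sigma_2$ so that the symmetric square root $(\sqrt{\Sigma_1}\Sigma_2\sqrt{\Sigma_1})^{1/2}$ emerges cleanly, and pinning down which $R$ attains the bound. The reduction to Gaussian couplings is essentially moment-matching, but the degenerate case of singular $\Sigma_i$ needs a short perturbation argument, replacing $\Sigma_i$ by $\Sigma_i + \epsilon I$ and passing to the limit.
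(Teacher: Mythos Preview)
The paper does not actually prove this proposition: it is stated as a known formula (the classical Bures--Wasserstein distance between Gaussians) and then used as input to the corollary that follows. So there is no argument in the paper to compare against.

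Your outline is correct and is essentially the standard derivation (in the spirit of Dowson--Landau, Olkin--Pukelsheim, Givens--Shortt). The reduction to the cross-covariance $C$ is sound because for \emph{any} coupling the centered joint second-moment matrix is a genuine covariance and hence positive semidefinite, while conversely every feasible $C$ is realised by the jointly Gaussian law; this is the two-sided inclusion that makes the matrix program exact, and you state it, if somewhat tersely. The Schur-complement parametrisation $C=\sqrt{\Sigma_1}\,R\,\sqrt{\Sigma_2}$ with $\|R\|_{\mathrm{op}}\le 1$ and the von Neumann / polar-decomposition step give the maximiser and the value $\tr{(\sqrt{\Sigma_1}\Sigma_2\sqrt{\Sigma_1})^{1/2}}$ exactly as claimed. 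Your proposed $\epsilon$-regularisation for singular $\Sigma_i$ is the standard fix, and continuity of both sides in $(\Sigma_1,\Sigma_2)$ closes it. In short: the paper omits the proof; what you propose would supply one, with no gaps beyond routine bookkeeping.
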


\begin{corollary} The Wasserstein distance between Gaussian distribution $\mathcal{N}(\mu, \Sigma)$  and the Dirac delta function $\delta(x-\mu_c)$ is given by,
\begin{align}\eqnlabel{W_dirac}
W_2^2(\mathcal{N}(\mu, \Sigma),\delta(x-\mu_c))=\|\mu-\mu_c\|^2+\tr{\Sigma}.
\end{align}
\end{corollary}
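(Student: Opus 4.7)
The plan is to bypass the infimum in the definition of $W_2$ by exploiting the degeneracy of the Dirac delta. Because the second marginal is a point mass at $\mu_c$, any probability measure $p\in\mathcal{P}_2(\mathcal{N}(\mu,\Sigma),\delta(x-\mu_c))$ is forced to satisfy $x_2 = \mu_c$ almost surely; hence the coupling is unique and must be the product measure $\mathcal{N}(\mu,\Sigma)\otimes\delta(x-\mu_c)$. This means the infimum in \eqn{W-dist} collapses to a single integral.

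Next I would carry out that integral directly. Substituting $x_2 = \mu_c$ into the cost gives
\begin{align*}
W_2^2(\mathcal{N}(\mu,\Sigma),\delta(x-\mu_c)) = \int_{\mathbb{R}^n} \|x_1-\mu_c\|^2 \, d\mathcal{N}(\mu,\Sigma)(x_1) = \Exp{\|X-\mu_c\|^2},
\end{align*}
where $X\sim\mathcal{N}(\mu,\Sigma)$. A standard bias-variance decomposition writes $X-\mu_c = (X-\mu) + (\mu-\mu_c)$, whence expanding the square and using $\Exp{X-\mu}=0$ together with $\Exp{\|X-\mu\|^2}=\tr{\Sigma}$ yields $\|\mu-\mu_c\|^2+\tr{\Sigma}$.

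As a consistency check, I would also verify that formally substituting $\mu_2=\mu_c$ and $\Sigma_2=0$ into the Gaussian formula \eqn{wassdef} from the preceding proposition gives the same answer: the cross term $2(\sqrt{\Sigma_1}\Sigma_2\sqrt{\Sigma_1})^{1/2}$ vanishes, leaving $\|\mu-\mu_c\|^2+\tr{\Sigma}$. The only mild obstacle is justifying that the proposition's formula applies in the degenerate limit $\Sigma_2\to 0$; the direct computation above sidesteps this issue entirely, so I would present it as the primary argument and mention the limiting interpretation only as a remark.
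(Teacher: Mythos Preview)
Your proposal is correct, but it inverts the paper's logic. The paper's entire proof is the limiting argument: it \emph{defines} $\delta(x-\mu_c)$ as $\lim_{\mu\to\mu_c,\Sigma\to 0}\mathcal{N}(\mu,\Sigma)$ and simply substitutes $\mu_2=\mu_c$, $\Sigma_2=0$ into \eqn{wassdef}, which is exactly what you relegate to a consistency check. Your primary route instead goes back to the definition \eqn{W-dist}, observes that a Dirac marginal forces a unique coupling, and evaluates $\Exp{\|X-\mu_c\|^2}$ by the bias--variance split. This is a genuinely different and in fact cleaner argument: it is self-contained, does not invoke the preceding proposition at all, and avoids the issue you yourself flag---namely that applying \eqn{wassdef} in the degenerate limit tacitly uses continuity of $W_2$ under $\Sigma_2\to 0$, which the paper does not justify. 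The paper's version is shorter but leans on that unstated continuity; yours trades one extra line of computation for a fully rigorous derivation from first principles.
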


\begin{proof}    
Defining the Dirac delta function as (see e.g., p. 160-161, \cite{hassani1999mathematical})
\begin{align*}
   \delta(x-\mu_c)=\lim_{\mu\to \mu_c,\Sigma\to 0} \mathcal{N}(\mu,\Sigma),
\end{align*}
and substituting in \eqn{wassdef}, we get the result.
\end{proof}

The Wasserstein distance in \eqn{W_dirac} can also be written as,
\begin{align}
   W_2^2(\mathcal{N}(\mu, \Sigma),\delta(x-\mu_c))&=\|\mu-\mu_c\|^2+\tr{\Sigma},\nonumber \\
&={(\mu-\mu_c)^T(\mu-\mu_c)+\tr{\Sigma}
},\nonumber \\
&={\tr{(\mu-\mu_c)(\mu-\mu_c)^T}+\tr{\Sigma}
}, \nonumber  \\
&={\tr{(\mu-\mu_c)(\mu-\mu_c)^T+\Sigma}} \eqnlabel{main2}.
\end{align}

%

\section{Data Assimilation by Minimizing Wasserstein Distance}
Here we consider the problem of updating the prior state estimate with available measurement data to arrive at the posterior state estimate. We assume $x\in\real^n$ is the true state, a deterministic quantity. The prior estimate of $x$ is denoted by $x^-$, which is a random variable with associate probability density function $p_{x^-}(x^-)$. Similarly, the posterior estimate  is denoted by $x^+$, which is also a random variable with associated probability density function  $p_{x^+}(x^+)$. We next assume that measurement $y\in\real^m$ is a function of the true state $x$, but corrupted by an additive noise $n$. It is modeled as 
\begin{align}
y := g(x) + n. 
\end{align} 
The noise is assumed to be a random variable with associated probability density function $p_n(n)$.

The errors associated with the prior and posterior estimates are defined as 
\begin{align}
e^- &:= x^- - x,\\
e^+ &:= x^+ - x,
\end{align}
which are also random variables with probability density functions $p_{e^-}(e^-)$ and $p_{e^+}(e^+)$ respectively.

The objective of data assimilation is to determine the mapping $(x^-,y) \mapsto x^+$ such that some performance metric on $e^+$ is optimized. Abstractly, this can be written as

\begin{align}
\min_{T(x^-,y)} d(e^+), \eqnlabel{aopt}
\end{align}
where $T: (x^-,y) \mapsto x^+$, and $d(e^+)$ is some cost function to be minimized. 

In this paper, we define $d(e^+):= W^2_2(p_{e^+}(e^+),\delta(e^+))$, i.e. minimize the Wasserstein distance of the posterior-error's probability density function from the Dirac at the origin $\delta(e^+)$. The Dirac is a degenerate probability density function that represents \textit{zero error} in the posterior estimate. The optimization therefore determines the map $T(\cdot,\cdot)$ that takes $p_{e^+}(e^+)$ closest to $\delta(e^+)$ with respect to the Wasserstein distance, resulting in the best posterior estimate of $x$ in this sense.

In the next subsections, we present few specific cases of \eqn{aopt}, which are of engineering importance.

\subsection{Linear Measurement with Gaussian Uncertainty} 
Here we consider the simplest case, where the measurement model is linear and uncertainty is Gaussian. We show that we recover the classical Kalman update.
 
In this case, the joint probability density function associated with $e^+$ be Gaussian, denoted by $\mathcal{N}(\mu_e^+, \Sigma_e^+)$, where 
\begin{align}
\mu_e^+ &:= \Exp{e^+},
\end{align}
and
\begin{align}
\Sigma_e^+ &:= \Exp{(e^+ - \mu_e^+)(e^+ - \mu_e^+)^T} = \Exp{e^+e^{+^T}} - \mu_e^+\mu_e^{+^T}.
\end{align}
We also note that \textit{zero error} is associated with the degenerate probability density function, the Dirac delta function $\delta(e^+)$. The Wasserstein distance between $\mathcal{N}(\mu_e^+, \Sigma_e^+)$ and the Dirac delta function $\delta(e^+)$ is given by,
\begin{align}
W_2^2(\mathcal{N}(\mu_e^+, \Sigma_e^+),\delta(e^+))&=\tr{\mu_e^+{\mu_e^+}^T+\Sigma_e^+} = \tr{\Exp{e^+e^{+^T}}}.
\end{align}

We first define the sensor model as 
\begin{align}
y := Cx + n, \eqnlabel{meas}
\end{align}
where $n$ is a Gaussian random variable defined by $n \sim \mathcal{N}(0,R)$. Consequently, $y$ is also a Gaussian random variable.

We next define the random variable $x^+$ to be a linear combination of random variables $x^-$ and $y$, i.e.
\begin{align}
x^+ := Gx^- + Hy, \eqnlabel{posterior}
\end{align}
 and determine $G$ and $H$ such that $W_2^2(\mathcal{N}(\mu_e^+, \Sigma_e^+),\delta(e^+))$ is minimized, i.e.

$$
\min_{G,H} W_2^2(\mathcal{N}(\mu_e^+, \Sigma_e^+),\delta(e^+)),
$$
where all the posterior quantities are functions of $G$ and $H$. We next express $e^+$ as
\begin{align*}
e^+ &:= x^+-x , \\
&= Gx^- + Hy - x, \\
&= Gx^- + H(Cx+n) - x, \\
&= Ge^- + (G+HC-I)x + Hn.
\end{align*}

Noting that $\Exp{e^-} = 0$ (for unbiased prior estimate), $\Exp{e^-n^T} = 0$ (because $e^-$ and $n$ are uncorrelated), and $R:=\Cov{n}$, we can express $\Cov{e^+}$ as 

\begin{align*}
\Cov{e^+} &= \Cov{\Big(Ge^- + (G+HC-I)x + Hn\Big)},\\
&= G\Cov{e^-}G^T + (G+HC-I)xx^T(G+HC-I)^T + HRH^T,\\
& = G\Sigma_e^-G^T + (G+HC-I)xx^T(G+HC-I)^T + HRH^T. 
\end{align*}

Choosing $G:= I-HC$, reduces $\Cov{e^+}$ to
\begin{align*}
\Cov{e^+} &= (I-HC)\Sigma_e^-(I-HC)^T +  HRH^T, \\
&= \Sigma_e^- + H(C\Sigma_e^-C^T+R)H^T -HC\Sigma_e^- - \Sigma_e^-C^T H^T 
\end{align*}
which is further minimized by solving for $H$ satisfying 
$$
\frac{\partial\, \tr{\Cov{e^+}}}{\partial H} = 0,
$$
or 
$$
 H^\ast(C\Sigma_e^-C^T+R) - \Sigma_e^-C^T = 0,
$$
resulting in the optimal solution
\begin{align}
H^\ast := \Sigma_e^-C^T(C\Sigma_e^-C^T+R)^{-1},
\end{align}
which is also the optimal Kalman gain. Further, choosing $G:= I-HC$, also ensures $\Exp{e^+} = 0$, i.e. posterior error is unbiased. Therefore, we see that for linear measurement and Gaussian uncertainty models, minimization of the Wasserstein distance results in the classical Kalman update.

\bibliographystyle{unsrt}
\bibliography{Analysis_ref}
\end{document}